\providecommand{\U}[1]{\protect\rule{.1in}{.1in}}
\newtheorem{theorem}{Theorem}
\newtheorem{proposition}[theorem]{Proposition}
\newtheorem{remark}[theorem]{Remark}
\newenvironment{proof}[1][Proof]{\noindent\textbf{#1.} }{\ \rule{0.5em}{0.5em}}
\begin{document}

\title{A Few Almost Trivial Notes on the Symplectic Radon Transform and the
Tomographic Picture of Quantum Mechanics}
\author{Maurice A. de Gosson\thanks{maurice.de.gosson@univie.ac.at}\\University of Vienna\\Faculty of Mathematics (NuHAG)}
\maketitle

\begin{abstract}
We emphasize in these pedagogical notes the that the theory of the Radon
transform and its applications is best understood using the theory of the
metaplectic group and the quadratic Fourier transforms generating metaplectic
operator.. Doing this we hope that these notes will be useful to a larger
audience, including researchers in time-frequency analysis.

\end{abstract}

\section{Introduction}

In many texts\footnote{A short non-exhaustive list is
\cite{Asorey,Facchi,dariano,damapa,Ibort,Ibortbis}} studying the tomographic
picture of quantum mechanics one finds the following definition of the Radon
transform of a quantum state $\widehat{\rho}$:
\begin{equation}
R_{\widehat{\rho}}(X,\mu,\nu)=\int W(x,p)\delta(X-\mu x-\nu p)dpdx\label{1}%
\end{equation}
where $\mu$ and $\nu$ are real numbers and $W(x,p)$ is the Wigner distribution
of $\widehat{\rho}$:%
\begin{equation}
W(x,p)=\frac{1}{2\pi\hbar}\int_{-\infty}^{\infty}e^{-\frac{i}{\hbar}py}\langle
x+\tfrac{1}{2}y|\widehat{\rho}|x-\tfrac{1}{2}y\rangle dy.\label{2}%
\end{equation}
One then also finds the following expression of the \textquotedblleft inverse
Radon transform\textquotedblright:%
\begin{equation}
W\psi(x,p)=\frac{1}{2\pi\hbar}\int R_{\widehat{\rho}}(X,\mu,\nu)e^{\frac
{i}{\hslash}(X-\mu x-\nu p)}dXd\mu d\nu\label{3}%
\end{equation}

The expression (\ref{1}), which should be interpreted as a distributional
bracket to have any meaning, is however difficult to justify mathematically.
It only makes sense for a restricted class of functions $Wx,p)$ (this function
has to be at least continuous, but this assumption is not necessarily made in
the usual texts). Similarly, the \textquotedblleft
derivation\textquotedblright\ of the inversion formula (\ref{3}) is usually
obscure and does not discuss convergence issues. In this short Note we propose
a rigorous redefinition of the Radon transform which has the additional
advantage of replacing this notion where it belongs, namely rigorous harmonic
analysis and the theory of the metaplectic group and its extensions. (We use
the definition of the metaplectic group using quadratic Fourier transforms
shortly reviewed in APPENDIX\ A.)

\section{The Radon transform}

We will only consider the case of a \textquotedblleft pure
state\textquotedblright, represented by function $\psi\in L^{2}(\mathbb{R})$.
The general case is then easy to obtain since by the spectral theorem for
trace class operators the Wigner distribution of an arbitrary state $\rho$ is
a convex sum of Wigner functions $W\psi_{j}$ \cite{QHA}.  

\begin{proposition}
Let $\widehat{\rho}$ be a pure state $|\psi\rangle$ with $\psi\in
L^{2}(\mathbb{R})$. We assume that $\psi\in L^{1}(\mathbb{R})\cap
L^{2}(\mathbb{R})$ and $\widehat{\psi}\in L^{1}(\mathbb{R})\cap L^{2}%
(\mathbb{R})$ ($\widehat{\psi}$ the Fourier transform of $\psi$). (i) The
(symplectic) Radon transform of the Wigner function $W\psi(x,p)$ is given by
\begin{equation}
R_{\widehat{\rho}}(X,\mu,\nu)=\lambda^{-1}|\widehat{U}_{\mu,\nu}\psi
(\lambda^{-1}X)|^{2}.\label{WU}%
\end{equation}
where $\widehat{U}_{\mu,\nu}\in\operatorname*{Mp}(n)$ is anyone of the two
metaplectic operators covering the rotation $U_{\mu,\nu}=%
\begin{pmatrix}
\mu/\lambda & \nu/\lambda\\
-\nu/\lambda & \mu/\lambda
\end{pmatrix}
$ where $\lambda=\sqrt{\mu^{2}+\nu^{2}}$. (ii) The inverse Radon transform is
given by the formula:%
\begin{equation}
W\psi(x,p)=\frac{1}{2\pi\hbar}\int R_{\widehat{\rho}}(X,\mu,\nu)e^{\frac
{i}{\hslash}(X-\mu x-\nu p)}dXd\mu d\nu\label{inverse}%
\end{equation}
understood as a Fourier transform of the function $R_{\widehat{\rho}}%
(X,\mu,\nu)$.
\end{proposition}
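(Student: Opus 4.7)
The plan is to deduce (i) from two classical properties of the Wigner distribution---symplectic covariance and the marginal identity $\int W\psi(x,p)\,dp=|\psi(x)|^{2}$---and to deduce (ii) by recognising (\ref{inverse}) as an instance of the projection--slice theorem, i.e.\ Fourier inversion for the symplectic Fourier transform of $W\psi$.

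For (i), I would first observe that since $\lambda^{2}=\mu^{2}+\nu^{2}$ the matrix $U_{\mu,\nu}$ is a rotation, hence lies in $\operatorname{SO}(2)\subset\operatorname{Sp}(2,\mathbb{R})$. Either metaplectic lift $\widehat{U}_{\mu,\nu}$ therefore satisfies the covariance identity $W\psi\circ U_{\mu,\nu}^{-1}=W(\widehat{U}_{\mu,\nu}\psi)$, and both lifts give the same $|\widehat{U}_{\mu,\nu}\psi|^{2}$. I would then make the area-preserving change of variables $(x',p')=U_{\mu,\nu}(x,p)$ in (\ref{1}); the linear form $\mu x+\nu p$ becomes $\lambda x'$ and the delta reduces to $\lambda^{-1}\delta(\lambda^{-1}X-x')$. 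Collapsing the $x'$-integration and applying the marginal property to the remaining $p'$-integration yields (\ref{WU}).

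For (ii), I would substitute the definition (\ref{1}) into the right-hand side of (\ref{inverse}) and integrate the delta to eliminate $X$, obtaining an integral whose $(\mu,\nu)$-part is $\iint e^{i(\mu(x'-x)+\nu(p'-p))/\hbar}\,d\mu\,d\nu$. This evaluates formally to $(2\pi\hbar)^{2}\delta(x'-x)\delta(p'-p)$, and the remaining $(x',p')$-integration then returns $W\psi(x,p)$ up to the stated normalisation. Equivalently, the partial Fourier transform of $R_{\widehat\rho}$ in the variable $X$ is a slice of the symplectic Fourier transform of $W\psi$, and (\ref{inverse}) is exactly the corresponding Fourier inversion formula.

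The main obstacle is legitimising these formal manipulations under the hypotheses $\psi,\widehat\psi\in L^{1}\cap L^{2}$. These conditions ensure that $W\psi$ is continuous and vanishes at infinity (Riemann--Lebesgue applied to (\ref{2})), that $\widehat{U}_{\mu,\nu}\psi\in L^{2}(\mathbb{R})$ so the right-hand side of (\ref{WU}) is a.e.\ defined, and that the line integrals defining $R_{\widehat\rho}$ exist classically via Fubini rather than only as distributional pairings. For the inversion one must further verify the multiple integral in (\ref{inverse}) by a limiting argument (e.g.\ Gaussian regularisation), since the $(\mu,\nu)$-integral is genuinely oscillatory and must be interpreted in the tempered-distribution or $L^{2}$ sense; the assumption $\widehat\psi\in L^{1}$ is precisely what permits the passage from the distributional identity to a pointwise one.
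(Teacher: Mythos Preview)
Your argument for (i) is essentially identical to the paper's: rotate by $U_{\mu,\nu}$, rescale the delta by $\lambda^{-1}$, invoke symplectic covariance $W\psi\circ U_{\mu,\nu}^{-1}=W(\widehat U_{\mu,\nu}\psi)$, and finish with the marginal property $\int W\phi(x,p)\,dp=|\phi(x)|^{2}$.

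For (ii) you take a slightly more direct route than the paper. The paper starts from the expression (\ref{WU}) just established, reinstates the $p$-marginal to write $|\widehat U_{\mu,\nu}\psi(\lambda^{-1}X)|^{2}=\int W(\widehat U_{\mu,\nu}\psi)(\lambda^{-1}X,P)\,dP$, undoes the symplectic covariance and the rotation by a second change of variables, and only then performs the formal $(\mu,\nu)$-Fourier inversion. You instead substitute the defining integral (\ref{1}) directly, collapse $X$ against the delta, and arrive at the same oscillatory $(\mu,\nu)$-integral in one step. Your path is shorter and makes the projection--slice structure transparent; the paper's path has the (modest) advantage of showing that (ii) can be read off from the metaplectic description (\ref{WU}) itself, so that part (i) already encodes the inversion. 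Both arguments treat the $(\mu,\nu)$-integral at the same formal level, and your closing remarks on regularisation under the hypotheses $\psi,\widehat\psi\in L^{1}\cap L^{2}$ go somewhat beyond what the paper actually supplies.
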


\begin{proof}
(i) Let us make the change of variables
\begin{equation}%
\begin{pmatrix}
u\\
v
\end{pmatrix}
=%
\begin{pmatrix}
\mu/\lambda & \nu/\lambda\\
-\nu/\lambda & \mu/\lambda
\end{pmatrix}%
\begin{pmatrix}
x\\
p
\end{pmatrix}
\label{unitary}%
\end{equation}
in the integral (\ref{1}).This leads to the expression%
\begin{equation}
R_{\widehat{\rho}}(X,\mu,\nu)=\int W\psi(U_{\mu,\nu}^{-1}(u,v))\delta
(X-\lambda u)dudv.\label{dirac}%
\end{equation}
Since $\delta(X-\lambda u)=\lambda^{-1}\delta(\lambda^{-1}X-u)$ this can be
rewritten
\begin{equation}
R_{\widehat{\rho}}(X,\mu,\nu)=\lambda^{-1}\int W\psi(\lambda^{-1}(\mu u-\nu
v,\nu u+\mu v))\delta(\lambda^{-1}X-u)dudv\label{diracbis}%
\end{equation}
In view of the symplectic covariance property \cite{Birk,Wigner,Littlejohn} of
the Wigner transform we have%
\begin{equation}
W\psi(U_{\mu,\nu}^{-1}(u,v))=W(\widehat{U}_{\mu,\nu}\psi)(u,v)\label{sympco}%
\end{equation}
where $\widehat{U}_{\mu,\nu}$ is anyone of the two metaplectic operators (see
the APPENDIX A) covering $U$ and hence (\ref{diracbis}) yields%
\begin{align*}
R_{\widehat{\rho}}(X,\mu,\nu) &  =\lambda^{-1}\int W(\widehat{U}_{\mu,\nu}%
\psi)(\lambda^{-1}X,v)\delta(\lambda^{-1}X-u)dudv\\
&  =\lambda^{-1}\int W(\widehat{U}_{\mu,\nu}\psi)(\lambda^{-1}X,v)dv
\end{align*}
hence (\ref{WU}) using the marginal properties
\begin{equation}
\int W\psi(x,p)dp=|\psi(x)|^{2}\text{ \ , \ }\int W\psi(x,p)dx=|\widehat{\psi
}(p)|^{2}\label{marginal}%
\end{equation}
of the Wigner transform, which are valid \cite{Birk,Wigner} since $\psi\in
L^{1}(\mathbb{R})\cap L^{2}(\mathbb{R})$ and $\widehat{\psi}\in L^{1}%
(\mathbb{R})\cap L^{2}(\mathbb{R})$. (ii) Let us denote $A$ the right-hand
side of the equality (\ref{inverse}). Using the first marginal property
(\ref{marginal}) we have
\begin{align*}
A &  =\lambda^{-1}\frac{1}{2\pi\hbar}\int|\widehat{U}_{\mu,\nu}\psi
(\lambda^{-1}X)|^{2}e^{\frac{i}{\hslash}(X-\mu x-\nu p)}dXd\mu d\nu\\
&  =\lambda^{-1}\frac{1}{2\pi\hbar}\int W(\widehat{U}_{\mu,\nu}\psi
)(\lambda^{-1}X,P)e^{\frac{i}{\hslash}(X-\mu x-\nu p)}dXdPd\mu d\nu.
\end{align*}
Replacing $X$ with $\lambda X$ and using the symplectic covariance property
(\ref{sympco}) we get
\begin{align*}
A &  =\frac{1}{2\pi\hbar}\int W(\widehat{U}_{\mu,\nu}\psi)(X,P)e^{\frac
{i}{\hslash}(\lambda X-\mu x-\nu p)}dXdPd\mu d\nu\\
&  =\frac{1}{2\pi\hbar}\int W\psi(U_{\mu,\nu}^{-1}(X,P))e^{\frac{i}{\hslash
}(\lambda X-\mu x-\nu p)}dXdPd\mu d\nu\\
&  =\frac{1}{2\pi\hbar}\int W\psi((\mu/\lambda)X-(\nu/\lambda)P,(\nu
/\lambda)X+(\mu/\lambda)P))e^{\frac{i}{\hslash}(\lambda X-\mu x-\nu
p)}dXdPd\mu d\nu.
\end{align*}
Setting $Y=(\mu/\lambda)X-(\nu/\lambda)P$ and $Z=(\nu/\lambda)X+(\mu
/\lambda)P$ (and hence $\lambda X=\mu Y+\nu Z$) this is
\[
A=\frac{1}{2\pi\hbar}\int W\psi(Y,Z))e^{\frac{i}{\hslash}(\mu(Y-x)+\nu
(Z-p))}dXdPd\mu d\nu.
\]
In view of the Fourier inversion formula, written formally as
\[
\int e^{\frac{i}{\hslash}(\mu(Y-x)+\nu(Z-p))}d\mu d\nu=2\pi\hbar
\delta(Y-x,Z-p)
\]
we thus have
\[
A=\int W\psi(x,p)\delta(Y-x,Z-p)dXdP=W\psi(x,p))
\]
which was to be proven.
\end{proof}

\section{Some explicit calculations}

Assume $\nu\neq0$. We have the following explicit form for the metaplectic
operator $\widehat{U}_{\mu,\nu}$ \cite{Birk,Folland}:%
\begin{equation}
\widehat{U}_{\mu,\nu}\psi(x)=\frac{e^{-i\pi/4}}{\sqrt{2\pi\hbar}}\sqrt
{\frac{\lambda}{\nu}}\int_{-\infty}^{\infty}\exp\left[  \frac{i}{\hbar}\left(
\frac{\mu}{2\nu}x^{2}-\frac{\lambda}{\nu}xx^{\prime}+\frac{\mu}{2\nu}%
x^{\prime2}\right)  \right]  \psi(x^{\prime})dx^{\prime}\label{meta1}%
\end{equation}
where the argument of $\sqrt{\lambda/\nu}$ can take two possible values. It
follows that%
\[
|\widehat{U}_{\mu,\nu}\psi(x)|^{2}=\frac{\lambda}{2\pi\hbar|\nu|}\left\vert
\int_{-\infty}^{\infty}\exp\left[  \frac{i}{\hbar}\left(  -\frac{\lambda}{\nu
}xx^{\prime}+\frac{\mu}{2\nu}x^{\prime2}\right)  \right]  \psi(x^{\prime
})dx^{\prime}\right\vert ^{2}%
\]
so that, taking (\ref{WU}) into account,%
\begin{equation}
R_{\widehat{\rho}}(X,\mu,\nu)=\frac{1}{2\pi\hbar|\nu|}\left\vert \int%
_{-\infty}^{\infty}\exp\left(  -\frac{i}{\hbar\nu}Xx^{\prime}\right)
\exp\left(  \frac{i\mu}{2\hbar\nu}x^{\prime2}\right)  \psi(x^{\prime
})dx^{\prime}\right\vert ^{2}.\label{explicit}%
\end{equation}
which we can rewrite as a Fourier transform%
\begin{equation}
R_{\widehat{\rho}}(X,\mu,\nu)=\frac{1}{|\nu|}\left\vert \widehat{F}\left[
\exp\left(  \frac{i\mu}{2\hbar\nu}x^{\prime2}\right)  \psi\right]  \left(
\frac{X}{\nu}\right)  \right\vert ^{2}.\label{explicitbis}%
\end{equation}
where
\[
\widehat{F}\phi(x)=\frac{e^{-i\pi/4}}{\sqrt{2\pi\hbar}}\int_{-\infty}^{\infty
}e^{-\frac{i}{\hbar}xx^{\prime}}\phi(x^{\prime})dx^{\prime}.
\]

\begin{remark}
In the language of time-frequency analysis one would say that the Radon
transform of $\psi$ is (up to a rescaling factor) the squared modulus of the
Fourier transform of the product of $\psi$ by a \textquotedblleft
chirp\textquotedblright.
\end{remark}

Denoting by $\widehat{V}_{-\mu/\nu}$ the operator of multiplication by
$\exp(i(\mu/\nu)/2\hbar)$ and by $\widehat{M}_{\nu}$ the scaling operator
\[
\widehat{M}_{\nu}\phi(x)=\sqrt{\nu}\phi(\nu x)
\]
this formula can be rewritten%
\[
\widehat{U}_{\mu,\nu}\psi(x)=\widehat{V}_{-\mu/\nu}\widehat{F}(\widehat{M}%
_{\lambda/\mu}\widehat{V}_{-\mu/\nu}\psi)
\]
where $\widehat{F}$ is essentially a Fourier transform: Formula
(\ref{explicitbis}) then becomes
\begin{equation}
R_{\widehat{\rho}}(X,\mu,\nu)=\left\vert \widehat{M}_{\nu}\widehat{F}%
(\widehat{V}_{-\mu/\nu}\psi)(X)\right\vert ^{2}.\label{WUbis}%
\end{equation}

\section{The Pauli problem for Gaussians}

That the full Radon transform of a state is not necessary to reconstruct it is
seen in the elementary example considered in this section, which is a version
of the so-called \textquotedblleft Pauli problem\textquotedblright.
Historically, this problem goes back to the famous question Pauli asked in
\cite{Pauli}, whether the probability densities $|\psi(x)|^{2}$ and
$|\widehat{\psi}(p)|^{2}$ uniquely determine the wavefunction $\psi(x)$. The
answer is negative, as is seen on the following simple example  \cite{mdpi}:
of Gaussian wavefunction in one spatial dimension
\begin{equation}
\psi(x)=\left(  \tfrac{1}{2\pi\sigma_{xx}}\right)  ^{1/4}e^{-\frac{x^{2}%
}{4\sigma_{xx}}}e^{\frac{i\sigma_{xp}}{2\hbar\sigma_{xx}}x^{2}}\label{Gauss1}%
\end{equation}
where $\sigma_{xx}$ is the variance in the position variable and $\sigma_{xp}$
the covariance in the position and momentum variables. Using the formula
giving the Fourier transform $\widehat{\psi}_{a,b}$ of
\begin{equation}
\psi_{a,b}(x)=e^{-\frac{1}{\hbar}(a+ib)x^{2}}\text{ \ , \ }a>0,b\in
\mathbb{R}\label{psiab}%
\end{equation}
which is%
\begin{equation}
\widehat{\psi}_{a,b}(p)=\frac{1}{\sqrt{2\pi\hbar}}\int_{-\infty}^{\infty
}e^{-\frac{i}{\hbar}px}\psi_{a,b}(x)dx=\frac{1}{\sqrt{2(a+ib)}}e^{-\frac
{1}{\hbar}(a+ib)^{-1}p^{2}}\label{FT}%
\end{equation}
we see that the Fourier transform of $\psi$ is explicitly given by
\begin{equation}
\widehat{\psi}(p)=e^{i\gamma}\left(  \tfrac{1}{2\pi\sigma_{pp}}\right)
^{1/4}e^{-\frac{p^{2}}{4\sigma_{pp}}}e^{-\frac{i\sigma_{xp}}{2\hbar\sigma
_{pp}}p^{2}}\label{FGauss1}%
\end{equation}
where $\gamma$ is an unessential constant real phase depending on the
covariances. It follows that we have
\[
|\psi(x)|^{2}=\left(  \tfrac{1}{2\pi\sigma_{xx}}\right)  ^{1/2}e^{-\frac
{x^{2}}{2\sigma_{xx}}}\text{ \ , \ }|\widehat{\psi}(p)|^{2}=\left(  \tfrac
{1}{2\pi\sigma_{pp}}\right)  ^{1/2}e^{-\frac{p^{2}}{2\sigma_{pp}}}%
\]
and these relations imply the knowledge of $\sigma_{xx}$ and of $\sigma_{pp}$.
The covariance $\sigma_{xp}\ $is then determined \emph{up to a sign} because
the state $\psi$ saturates the Robertson--Schr\"{o}dinger inequality: we have
\begin{equation}
\sigma_{xx}\sigma_{pp}-\sigma_{xp}^{2}=\tfrac{1}{4}\hbar^{2}\label{RSUP1}%
\end{equation}
and this identity can be solved in $\sigma_{xp}$ yielding $\sigma_{xp}%
=\pm(\sigma_{xx}\sigma_{pp}-\frac{1}{4}\hbar^{2})^{1/2}$.

Let is now apply the Radon transform in its form (\ref{explicit}) to $\psi$.
For this it suffices to observe that the Fourier formula (\ref{FT}) implies
that%
\begin{equation}
|\widehat{\psi}_{a,b}(p)|^{2}=\frac{1}{2\sqrt{a^{2}+b^{2}}}\exp\left[  \left(
-\frac{2}{\hbar}\frac{a}{a^{2}+b^{2}}\right)  p^{2}\right]  . \label{FTmod}%
\end{equation}
Now, formula (\ref{explicit}) reads here
\begin{multline}
R_{\widehat{\rho}}(X,\mu,\nu)=\nonumber\\
C\left\vert \int_{-\infty}^{\infty}\exp\left(  -\frac{i}{\hbar\nu}Xx^{\prime
}\right)  \exp\left[  i\left(  \frac{\mu}{2\hbar\nu}+\frac{i\sigma_{xp}%
}{2\hbar\sigma_{xx}}\right)  x^{\prime2}\right]  dx^{\prime}\right\vert ^{2}.
\end{multline}
where
\[
C=\frac{1}{2\pi\hbar|\nu|}\left(  \frac{1}{2\pi\sigma_{xx}}\right)  ^{1/2}%
\]
hence, choosing
\[
a=\frac{\hbar}{4\sigma_{xx}}\text{ \ },\text{ \ }b=\frac{\mu}{2\nu}%
+\frac{\sigma_{xp}}{2\sigma_{xx}}.
\]
in (\ref{FTmod}), we have, using the identity $\sigma_{xx}\sigma_{pp}%
+\sigma_{xp}^{2}=\frac{1}{4}\hbar^{2}$,
\[
a^{2}+b^{2}=\frac{\sigma_{pp}}{4\sigma_{xx}}+\left(  \frac{\mu}{2\nu}\right)
^{2}+\frac{\mu}{2\nu}\frac{\sigma_{xp}}{\sigma_{xx}}%
\]
and thus (\ref{explicit}) yields%
\[
R_{\widehat{\rho}}(X,\mu,\nu)=\frac{1}{2\pi\hbar|\nu|}\left\vert \int%
_{-\infty}^{\infty}e^{-\frac{i}{\hbar\nu}Xx^{\prime}}\psi_{a,b}(x^{\prime
})dx^{\prime}\right\vert ^{2}=\frac{1}{|\nu|}\left\vert \widehat{\psi}%
_{a,b}\left(  \frac{X}{\nu}\right)  \right\vert ^{2}%
\]
that is, by (\ref{FTmod}),
\begin{equation}
R_{\widehat{\rho}}(X,\mu,\nu)=\frac{1}{2|\nu|(a^{2}+b^{2})}\exp\left[
-\frac{1}{\hbar}\frac{a}{a^{2}+b^{2}}\left(  \frac{X}{\nu}\right)
^{2}\right]  \label{wouro}%
\end{equation}
where%
\begin{equation}
\frac{a}{a^{2}+b^{2}}=\frac{\hbar}{\sigma_{pp}+4\varepsilon^{2}\sigma
_{xx}+4\varepsilon\sigma_{xp}}\text{ \ },\text{ \ }\varepsilon=\frac{\mu}%
{2\nu}. \label{ab}%
\end{equation}
This formula allows the unambiguous determination of the covariance
$\sigma_{xp}$ once the variances $\sigma_{xx}$ and $\sigma_{pp}$ are known.
Notice that it suffices with one choice of parameters $\mu,\nu$ to determine
the unknown covariance. A geometric explanation will be given below.

\section{Geometric interpretation}

The Wigner transform of the Gaussian (\ref{Gauss1}) is
\cite{Birk,Wigner,Littlejohn}%
\[
W\psi(x,p)=\frac{1}{\pi\hbar}e^{-\frac{1}{\hbar}Gx^{2}}%
\]
where $G=\frac{2}{\hbar}%
\begin{pmatrix}
\sigma_{pp} & -\sigma_{xp}\\
-\sigma_{xp} & \sigma_{xx}%
\end{pmatrix}
$. The matrix
\[
\Sigma=\frac{\hbar}{2}G^{-1}=%
\begin{pmatrix}
\sigma_{xx} & \sigma_{xp}\\
\sigma_{xp} & \sigma_{pp}%
\end{pmatrix}
\]
is thus the usual covariance matrix of the state $\psi$, and to it one
associates the covariance ellipse
\[
\Omega:\frac{1}{2}(x,p)\Sigma^{-1}\tbinom{x}{p}\leq1
\]
that is, explicitly,%
\[
\Omega:\frac{\sigma_{pp}}{2D}x^{2}-\frac{\sigma_{xp}}{D}px+\frac{\sigma_{xx}%
}{2D}p^{2}\leq1
\]
where
\[
D=\sigma_{xx}\sigma_{pp}-\sigma_{xp}^{2}=\tfrac{1}{4}\hbar^{2}.
\]
Consider next the straight lines $\ell_{\mu,\nu}$ in the phase plane defined
by $\mu x+\nu p=0$ and their intersections with $\Omega$. The set $\Omega
\cap\ell_{0,1}$ is the intersection of $\Omega$ with the $x$-axis and is thus
the real interval $[-\hbar/\sqrt{2\sigma_{xx}},\hbar/\sqrt{2\sigma_{xx}}]$
and, similarly, $\Omega\cap\ell_{1,0}$ is the intersection of $\Omega$ with
the $p$-axis, that is, $[-\hbar/\sqrt{2\sigma_{pp}},\hbar/\sqrt{2\sigma_{pp}%
}]$. Thus, the knowledge of these two particular intersections determine the
variances $\sigma_{xx}$ and $\sigma_{pp}$. In the general case $\mu\nu\neq0$
the intersection $\Omega\cap\ell_{\mu,\nu}$ can be parametrized by $x$ (or
$p)$ and is the interval defined by%
\[
\left(  \sigma_{pp}+\frac{2\mu}{\nu}\sigma_{xp}+\frac{\mu^{2}}{\nu^{2}}%
\sigma_{xx}\right)  x^{2}\leq\tfrac{1}{2}\hbar^{2}.
\]
We now observe that the coefficient of $x^{2}$ in this formula is exactly the
second denominator in formula (\ref{ab}) describing the Radon transform of the
Gaussian (\ref{Gauss1}). This is not a mere coincidence. We begin by noting
that the marginal conditions (\ref{marginal}) can be viewed as line integrals%
\begin{align*}
\int_{\ell_{0,1}}W\psi(x,p)dp  &  =|\psi(x)|^{2}\\
\int_{\ell_{1,0}}W\psi(x,p)dx  &  =|\widehat{\psi}(p)|^{2}.
\end{align*}
In the general case:

\begin{proposition}
Let $\psi,\widehat{\psi}\in L^{1}(\mathbb{R})\cap L^{2}(\mathbb{R})$. Let
$\ell_{\mu,\nu}^{X}$ be the straight line in $\mathbb{R}^{2}$ with equation
$\mu x+\nu p=X$. The Radon transform of $\psi$is the line integral
\begin{equation}
R_{\widehat{\rho}}(X,\mu,\nu)=\int_{\ell_{\mu,\nu}^{X}}W\psi(z_{_{\mu,\nu}%
})dz_{_{\mu,\nu}}\label{RadonGeom}%
\end{equation}
where $dz_{_{\mu,\nu}}$ is the push-forward of of the Lebesgue measure $dx$.
\end{proposition}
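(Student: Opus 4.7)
The plan is to convert the distributional definition (\ref{1}) of the Radon transform directly into a line integral on $\ell_{\mu,\nu}^{X}$ by applying the same orthogonal change of coordinates used in the proof of Proposition~1(i). The Dirac factor $\delta(X-\mu x-\nu p)$ in (\ref{1}) is concentrated precisely on $\ell_{\mu,\nu}^{X}$, and the rotation $U_{\mu,\nu}$ straightens this line into a coordinate line, so the resulting coarea-type computation becomes essentially automatic.

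Concretely, setting $(u,v)=U_{\mu,\nu}(x,p)$, the substitution preserves $dx\,dp=du\,dv$ (since $U_{\mu,\nu}$ is orthogonal) and turns $\delta(X-\mu x-\nu p)$ into $\delta(X-\lambda u)=\lambda^{-1}\delta(X/\lambda-u)$, yielding
\begin{equation*}
R_{\widehat{\rho}}(X,\mu,\nu)=\lambda^{-1}\int W\psi\bigl(U_{\mu,\nu}^{-1}(X/\lambda,v)\bigr)\,dv.
\end{equation*}
Because $U_{\mu,\nu}^{-1}$ is an isometry, the map $v\mapsto U_{\mu,\nu}^{-1}(X/\lambda,v)$ is an arc-length parametrization of $\ell_{\mu,\nu}^{X}$, so the right-hand side equals $\lambda^{-1}\int_{\ell_{\mu,\nu}^{X}}W\psi\,d\sigma$, where $d\sigma$ is Euclidean arc length on the line.

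It then remains to identify the measure $\lambda^{-1}d\sigma$ with the pushforward $dz_{\mu,\nu}$ of the Lebesgue measure $dx$. Assuming $\nu\neq0$, I would parametrize $\ell_{\mu,\nu}^{X}$ by $x\mapsto(x,(X-\mu x)/\nu)$; a one-line Jacobian computation gives $d\sigma=(\lambda/|\nu|)|dx|$, so $\lambda^{-1}d\sigma$ is the pushforward of $|dx|/|\nu|$, matching the measure in the statement. The case $\mu\neq0$ is handled symmetrically by parametrizing via $p$. I expect the main subtlety to lie in the bookkeeping of this normalization: one has to specify precisely which coordinate projection defines the pushforward so that the factor $\lambda^{-1}$ arising from the rotation step is absorbed correctly into the projection factor $1/|\nu|$ (resp.\ $1/|\mu|$). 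Integrability of $W\psi$ along almost every line $\ell_{\mu,\nu}^{X}$, which legitimizes the implicit use of Fubini, follows from the hypothesis $\psi,\widehat{\psi}\in L^{1}\cap L^{2}$ together with the marginal identities (\ref{marginal}).
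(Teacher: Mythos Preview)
Your approach is essentially identical to the paper's: both reduce the Radon integral via the orthogonal change of variables $U_{\mu,\nu}$ to the one-variable expression $\lambda^{-1}\int W\psi\bigl(U_{\mu,\nu}^{-1}(X/\lambda,v)\bigr)\,dv$ and then identify this with a line integral over $\ell_{\mu,\nu}^{X}$. Your caution about the normalization of $dz_{\mu,\nu}$ is well placed---the paper's own proof is loose at exactly this point (a factor $\lambda^{-1}$ is silently dropped between its two displayed formulas for $R_{\widehat{\rho}}$), and the phrase ``push-forward of $dx$'' is never made fully precise there either.
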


\begin{proof}
Denote by $W$ the integral on the left-hand side. Parametrizing $\ell_{\mu
,\nu}^{X}$ by $x=\nu t+\mu^{-1}X$, $p=-\mu t$ we have%
\begin{equation}
W=\lambda\int_{-\infty}^{\infty}W\psi(\nu t+\mu^{-1}X,-\mu t)dt\label{para}%
\end{equation}
where $\lambda=\sqrt{\mu^{2}+\nu^{2}}$. Returning to formula (\ref{dirac}), we
have%
\begin{align}
R_{\widehat{\rho}}(X,\mu,\nu) &  =\lambda^{-1}\int W\psi(U_{\mu,\nu}%
^{-1}(u,v))\delta(\lambda^{1}X-u)dudv\\
&  =\int W\psi(U_{\mu,\nu}^{-1}(\lambda^{-1}X,v))dv
\end{align}
that is, since $U_{\mu,\nu}^{-1}=%
\begin{pmatrix}
\mu/\lambda & -\nu/\lambda\\
\nu/\lambda & \mu/\lambda
\end{pmatrix}
$, and replacing $v$ with $s$,
\begin{equation}
R_{\widehat{\rho}}(X,\mu,\nu)=\int W\psi(\mu\lambda^{-2}X-v\lambda
^{-1}s,v\lambda^{-2}X+\mu\lambda^{-1}s))ds.\label{parabis}%
\end{equation}

\end{proof}

\section{Extension to higher dimensions: hints}

All of the above can be generalized without major difficulties to the case of
states in $L^{2}(\mathbb{R}^{n})$. We only sketch the main modifications
here,a detailed account will appear elsewhere. 

The definition (\ref{1}) of the Radon transform should be replaced with%
\begin{equation}
R_{\widehat{\rho}}(X,A,B)=\int_{\mathbb{R}^{n}\times\mathbb{R}^{n}%
}W(x,p)\delta(X-Ax-Bp)dpdx\label{radonn}%
\end{equation}
where $X\in\mathbb{R}^{n}$ and $A,B$ are two real square $n\times n$ matrices
with such that $A^{T}B=BA^{T}$ and $\operatorname*{rank}(A,B)=n$. This ensures
us that the subspace $\ell_{A,B}^{X}=\{(x,p):Ax+Bx=X\}$ of  $\mathbb{R}%
^{2n}\equiv T^{\ast}\mathbb{R}^{n}$ equipped with its standard symplectic
structure is an affine Lagrangian plane \cite{Birk}, which allows us to
rewrite the geometric version (\ref{RadonGeom}) as a surface integral
\[
R_{\widehat{\rho}}(X,A,B)=\int_{\ell_{A,B}^{X}}W\psi(z_{_{A,B}})dz_{_{A,B}}%
\]
(the fact that Lagrangian planes intervene is crucial since the unitary group
acts transitively on the Lagrangian Grassmannian). The inversion formula
(\ref{inverse}) should then be replaced with an expression of the type%
\begin{equation}
W\psi(x,p)=\left(  \frac{1}{2\pi\hbar}\right)  ^{n}\int R_{\widehat{\rho}%
}(X,A,B)e^{\frac{i}{\hslash}(X-Ax-Bp)}dXdAdB.\label{inversen}%
\end{equation}

\section*{APPENDIX A: The metaplectic group $\operatorname*{Mp}(n)$}

For a detailed study of the metaplectic group $\operatorname*{Mp}(n)$ see
\cite{Birk}.

Let $S=%
\begin{pmatrix}
A & B\\
C & D
\end{pmatrix}
$ be a real $2n\times2n$ matrix, where the \textquotedblleft
blocks\textquotedblright\ $A,B,C,D$ are $n\times n$ matrices. Let $J=%
\begin{pmatrix}
0 & I\\
-I & 0
\end{pmatrix}
$ the standard symplectic matrix. We have $S\in\operatorname*{Sp}(n)$ (the
symplectic group) if and only $SJS^{T}=S^{T}JS=J$. These relations are
equivalent to any of the two sets of conditions
\begin{align}
A^{T}C\text{, }B^{T}D\text{ \ \textit{are symmetric, and} }A^{T}D-C^{T}B  &
=I\label{cond12}\\
AB^{T}\text{, }CD^{T}\text{ \ \textit{are\ symmetric, and} }AD^{T}-BC^{T}  &
=I\text{.} \label{cond22}%
\end{align}
One says that $S$ is a \emph{free symplectic matrix }if $B$ is invertible,
i.e. $\det B\neq0$. To a free symplectic matrix is associated a generating
function: it is the quadratic form%
\begin{equation}
\mathcal{A}(x,x^{\prime})=\frac{1}{2}DB^{-1}x\cdot x-B^{-1}x\cdot x^{\prime
}+\frac{1}{2}B^{-1}Ax^{\prime}\cdot x^{\prime}. \label{wfree}%
\end{equation}
The terminology comes from the fact that the knowledge of $\mathcal{A}%
(x,x^{\prime})$ uniquely determines the free symplectic matrix $S$: we have%
\[%
\begin{pmatrix}
x\\
p
\end{pmatrix}
=%
\begin{pmatrix}
A & B\\
C & D
\end{pmatrix}%
\begin{pmatrix}
x^{\prime}\\
p^{\prime}%
\end{pmatrix}
\Longleftrightarrow\left\{
\begin{array}
[c]{c}%
p=\nabla_{x}\mathcal{A}(x,x^{\prime})\\
p^{\prime}=-\nabla_{x^{\prime}}\mathcal{A}(x,x^{\prime})
\end{array}
\right.
\]
as can be verified by a direct calculation.

Now, to every free symplectic matrix $S_{\mathcal{A}}$ we associate two
operators $\widehat{S}_{\mathcal{A},m}$ by the formula%
\begin{equation}
\widehat{S}_{\mathcal{A},m}\psi(x)=\left(  \tfrac{1}{2\pi\hbar}\right)
^{n/2}i^{m-n/2}\sqrt{|\det B^{-1}|}\int e^{\frac{i}{\hbar}\mathcal{A}%
(x,x^{\prime})}\psi(x^{\prime})d^{n}x^{\prime} \label{qft1}%
\end{equation}
where $m$ corresponds to a choice of argument for $\det B^{-1}$: $m=0$
$\operatorname{mod}2$ if $\det B^{-1}>0$ and $m=1$ $\operatorname{mod}2$ if
$\det B^{-1}<0$. It is not difficult to prove that the generalized Fourier
transforms $\widehat{S}_{\mathcal{A},m}$ are unitary operators on
$L^{2}(\mathbb{R}^{n})$. These operators generate the metaplectic group
$\operatorname*{Mp}(n)$. One shows that every $\widehat{S}\in
\operatorname*{Mp}(n)$ can be written (non uniquely) as a product
$\widehat{S}_{\mathcal{A},m}\widehat{S}_{\mathcal{A}^{\prime},m^{\prime}}$.
This group is a double covering of $\operatorname*{Sp}(n)$, the covering
projection being simply defined by
\begin{equation}
\pi_{\operatorname*{Mp}}:\operatorname*{Mp}(n)\longrightarrow
\operatorname*{Sp}(n)\text{ \ , \ }\pi_{\operatorname*{Mp}}(\widehat{S}%
_{\mathcal{A},m})=S_{\mathcal{A}}. \label{pimp}%
\end{equation}

\end{document}